\newtheorem{theorem}{Theorem}
\newtheorem{lemma}{Lemma}
\title{Streaming Multi-agent Pathfinding}
\author{
Mingkai Tang$^1$
\and
Lu Gan$^1$\And
Kaichen Zhang$^2$\\
\affiliations
$^1$Hong Kong University of Science and Technology\\
$^2$Hong Kong University of Science and Technology (Guangzhou)\\
\emails
\{mtangag, lganaa, kzhangbi\}@connect.ust.hk
}
\begin{document}

\maketitle

\begin{abstract}
The task of the multi-agent pathfinding (MAPF) problem is to navigate a team of agents from their start point to the goal points. However, this setup is unsuitable in the assembly line scenario, which is periodic with a long working hour. To address this issue, the study formalizes the streaming MAPF (S-MAPF) problem, which assumes that the agents in the same agent stream have a periodic start time and share the same action sequence. The proposed solution, Agent Stream Conflict-Based Search (ASCBS), is designed to tackle this problem by incorporating a cyclic vertex/edge constraint to handle conflicts. Additionally, this work explores the potential usage of the disjoint splitting strategy within ASCBS. Experimental results indicate that ASCBS surpasses traditional MAPF solvers in terms of runtime for scenarios with prolonged working hours.
\end{abstract}

\section{Introduction}

The assembly line is a manufacturing process where workpieces are moved between workstations for semi-assembly until final assembly. It is commonly applied in industries such as cars, airplanes, and consumer electronics \cite{boysen2022assembly}. Workpieces can be transferred using a conveyor, which can only move them along fixed paths, making adjustments difficult. An alternative method involves using a team of robots to transport workpieces. The main challenge is navigating robots between workstations without collisions. This challenge is known as the multi-agent pathfinding (MAPF) problem and has been extensively studied. The MAPF problem is applied in various real-world scenarios, including warehouse management \cite{li2021lifelong,xu2022multi,zhang2024guidance}, traffic control \cite{ho2019multi,li2023intersection}, and pipe design \cite{belov2020multi}.

The assembly line operates with a production rhythm. Each workstation periodically receives a workpiece and produces a higher assembly workpiece. The time interval for the period is defined as the cycle time. In the view of the robot team, they need to transport the workpiece between workstations every cycle time. The traditional MAPF problem can be adapted for planning in a periodic scenario with finite working hours, which describes the agent by the start point, the goal point, and the start time. However, in assembly line factories with long working hours, the number of agents grows rapidly, implying that the running time for computing the optimal solution increases exponentially due to the NP-hardness of the MAPF problem \cite{banfi2017intractability}. 
Additionally, in environments where humans and robots share space, the predictability of robot motions is crucial, which cannot be achieved by the traditional MAPF problem setup.

\begin{figure}[t]
    \centering

    \includegraphics[width=0.30\textwidth]{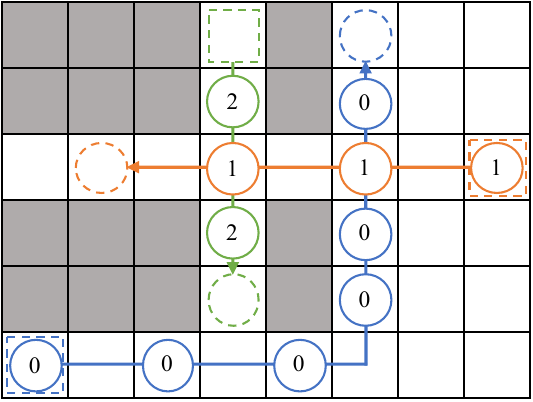}

    \hfill
   
    \caption{A snapshot of the S-MAPF problem where the cycle time is $2$. The snapshot is taken at a time step of $2 \times k$ where $k$ is a sufficiently large integer. White cells are feasible, while grey cells are infeasible. The dashed square marks the start point of the agent stream, and the dashed circle indicates the goal point. The solid circle represents an agent in the stream, with the number denoting the stream ID. Arrows depict the cells that the path crosses. The initial start times for agent streams 0, 1, and 2 are 0, 0, and 1, respectively, with action sequences 'RRRRRUUWUUU', 'LLLLLL', and 'DDDD'. 'U' stands for up, 'D' for down, 'L' for left, 'R' for right, and 'W' for wait. }
    \label{fig:exmaple}
\end{figure}

Hence, we suggest planning with \textit{agent streams} to solve the problems of long working hours and predictability. An agent stream is an agent group containing an infinite number of agents, where agents in an agent stream depart from the start vertex every cycle time and share the same action sequence. At both the start and goal points, there are designated private parking zones for robot loading and unloading. This setup ensures that agents disappear after reaching their goals without disrupting others. It is also adopted in the online MAPF problem \cite{vsvancara2019online}. Formally describing an agent stream, let $s$ be the initial start time of the agent stream and $c$ be the cycle time. Within the agent stream, there exists an agent departing at time step $k \cdot c + s$, where $k \in\mathbb{N}$. We term the task of finding a collision-free path for each agent stream from its start point to its goal point as \textit{streaming multi-agent pathfinding} (S-MAPF). Figure \ref{fig:exmaple} shows an example of the S-MAPF problem. The S-MAPF solution offers endless working hours availability and predictable agent movement when collaborating with humans due to the consistent action sequence. To solve the S-MAPF problem, the Agent Stream Conflict-Based Search (ASCBS) algorithm is proposed, ensuring both optimality and completeness. To resolve the conflict between different agent streams in ASCBS, we introduce the cyclical vertex constraint and the cyclical edge constraint. Additionally, the disjoint splitting strategy \cite{li2019disjoint} is tailored to suit the S-MAPF problem setting. 
The main contributions of this paper are as follows.
\begin{itemize}
    \item We propose the formalization of the S-MAPF problem, which can be adopted in the scenario with unlimited working hours.
    \item We present a two-level approach, ASCBS, to optimally solve the S-MAPF problem. The cyclical vertex constraint and cyclical edge constraint are proposed to resolve agent stream conflicts. We also explore the potential use of disjoint splitting within ASCBS.

    \item We conduct a comprehensive experimental evaluation to compare the computational efficiency of different ASCBS variants. Additionally, an experiment is done that demonstrates ASCBS's high computational efficiency over traditional MAPF solvers in scenarios with long working hours.
    \item We introduce several extensions of the S-MAPF problem that relax certain assumptions.
\end{itemize}

\section{Related Work}
Some MAPF problem variants are introduced to address real-world applications. For example, the MAPF problem with delay probabilities \cite{ma2017multi} accounts for situations where agents may experience delays in their paths. The MAPF problem for large agents \cite{li2019multi} assumes that the agent may occupy more than one vertex. In this study, we introduce the S-MAPF problem tailored to the assembly line scenario, where agents can form streams to continuously transport workpieces. To our knowledge, there are two closely related works for a similar scenario. The Precedence Constrained Multi-Agent Task Assignment and Path-Finding Problem \cite{brown2020optimal} is also applied in assembly, but it focuses on planning for a single final product and emphasizes precedence constraints for operations. The Periodic Multi-Agent Path Planning Problem \cite{kasaura2023periodic} assumes agents' periodic appearances but is defined in continuous space. It is solved as a continuous optimization problem, limiting its applicability to small-scale instances.

In the field of MAPF, conflict-based search \cite{sharon2015conflict} is a well-studied optimal solver, with techniques proposed to accelerate the search process, such as Prioritize Conflicts \cite{boyarski2015icbs}, Disjoint Splitting \cite{li2019disjoint}, and Mutex Reasoning \cite{zhang2022multi}. This study introduces a variant of conflict-based search to solve the S-MAPF problem, incorporating some of these acceleration techniques into the algorithm.
\begin{figure*}[t]
    \centering
    \begin{minipage}{0.285\textwidth}
        \centering
        \includegraphics[width=\textwidth]{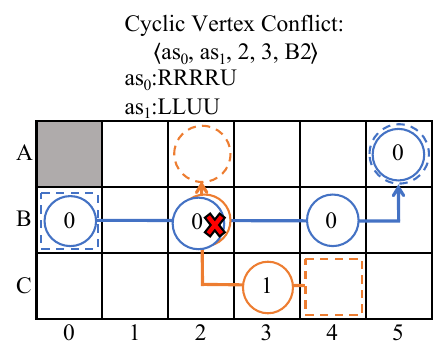}
    \end{minipage}
    \hfill
    \begin{minipage}{0.285\textwidth}
        \centering
        \includegraphics[width=\textwidth]{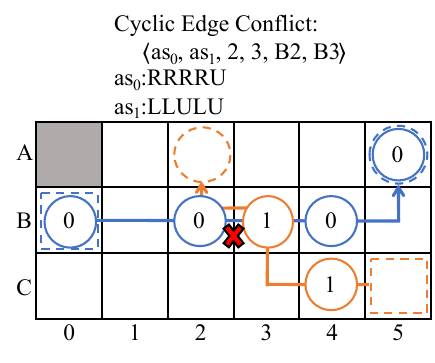}
    \end{minipage}
    \hfill
    \begin{minipage}{0.20\textwidth}
        \centering
        \includegraphics[width=\textwidth]{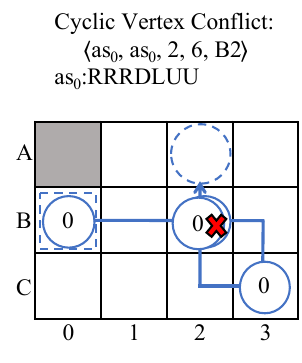}
    \end{minipage}
    \hfill
    \begin{minipage}{0.2\textwidth}
        \centering
        \includegraphics[width=\textwidth]{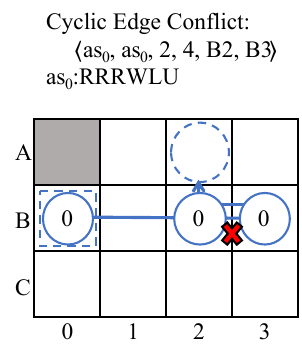}
    \end{minipage}
    \caption{Examples of the conflicts where the cycle time is 2.
    The time step for the snapshot is $2 \times k$, where $k$ is a sufficiently large integer. 
    The initial start times of agent streams $0$ and $1$ are $0$ and $1$. 
    The conflict and the action sequences are shown at the top of the figure.}
    \label{fig:conflict}
\end{figure*}


\section{Problem Definition}
The S-MAPF problem instance is denoted by the tuple $\langle G, c, AS \rangle$. 
In this case, $G = (V, E)$ is an undirected graph with unit edge length, and $c$ denotes the cycle time shared by all agent streams. 
The set $AS = \{as_0, as_1, ..., as_{n-1}\}$ represents the agent streams, where $n$ is the number of agent streams. 
Each $as_i$ can be described as $\{v^s_i, v^g_i, t^s_i\}$ where $v^s_i$ and $v^g_i$ are the start and goal vertices, respectively. 
The initial start time of $as_i$ is denoted by $t^s_i \in [0, c-1]$. 
At each time step $k \cdot c + t^s_i$ for all $k \in \mathbb{N}$, an agent of $as_i$ will appear at $v^s_i$. After that, at each time step, the agent can take an action to move to the neighboring vertex or wait at the current vertex. Agents in the same agent stream are assumed to share the same action sequence, which implies that they also have the same path. In addition, the agent is assumed to disappear after reaching its goal vertex.

The objective of the S-MAPF problem is to find a collision-free path for each agent stream from its start vertex to its goal vertex. The solution can be denoted as $P = \{p_0, p_1,...,p_{n-1}\}$, and the path of $as_i$ is $p_i=[p_i^0, p_i^1, ..., p_i^{l_i-1}]$, with $l_i$ as the path length. We name $p^j_i$ the $j$-th \textit{step} of agent stream $as_i$. For an agent whose start time is $k \cdot c + t^s_i$ for a specific $k$, this agent will be at the vertex $p_i^{t - (k \cdot c + t^s_i)}$ at the time step $t$ when $k \cdot c + t^s_i \leq t < k \cdot c + t^s_i + l_i$. A solution is collision-free if and only if there are no \textit{cyclic vertex conflicts} and \textit{cyclic edge conflicts}. Formally, the cyclic vertex conflict is denoted as $\langle as_i, as_j, q_i, q_j, v \rangle$. It occurs if and only if there exist two nonnegative integers $k_i$, $k_j$ such that the following equations are satisfied:
\begin{equation} \label{equ:v1}
    p_i^{q_i} = v = p_j^{q_j}
\end{equation} \begin{equation} \label{equ:v2}
    (k_i \cdot c + t^s_i) + q_i = (k_j \cdot c + t^s_j) + q_j
\end{equation}
\begin{equation} \label{equ:v3}
    (i - j)^2 + (k_i - k_j)^2 \neq 0
\end{equation}
It means that the agent of $as_i$ with start time $k_i \cdot c + t^s_i$ is located at the same vertex $v$ at the time step $(k_i \cdot c + t^s_i) + q_i$ as the agent of $as_j$ with start time $k_j \cdot c + t^s_j$. In addition, when $as_i$ and $as_j$ are the same agent stream ($j - i = 0$), $k_i$ equal to $k_j$ won't cause a conflict because these two agents are the same agent.
We use $\langle as_i, as_j, q_i, q_j, v_x, v_y \rangle$ to denote the cyclic edge conflict. It occurs if and only if there exist two non-negative integers $k_i$, $k_j$ such that the following equations are satisfied:
\begin{equation} \label{equ:e1}
    p_i^{q_i} = v_x = p_j^{q_{j+1}} \land p_i^{q_{i+1}} = v_y = p_j^{q_j}
\end{equation} 
\begin{equation} \label{equ:e2}
    (k_i \cdot c + t^s_i) + q_i = (k_j \cdot c + t^s_j) + q_j
\end{equation} 
\begin{equation} \label{equ:e3}
    (i - j)^2 + (k_i - k_j)^2 \neq 0
\end{equation}
It implies that the agent of $as_i$ with start time $k_i \cdot c + t^s_i$ crosses the same edge $(v_x, v_y)$ at the time step $(k_i \cdot c + t^s_i) + q_i$ with the agent of $as_j$ with start time $k_j \cdot c + t^s_j$. 
Similar to the cyclic vertex conflict, when $as_i$ and $as_j$ are the same agent stream, $k_i$ equal to $k_j$ won't cause a conflict.
Notably, Equations \ref{equ:v2} and \ref{equ:e2} are the same and can be further derived:
\begin{equation}  \label{equ:ve2}
    t^s_i + q_i \equiv t^s_j + q_j \ (mod \ c)
\end{equation}
It should be noted that $as_i$ and $as_j$ in the cyclic vertex conflict $\langle as_i, as_j, q_i, q_j, v \rangle$ and cyclic edge conflict $\langle as_i, as_j, q_i, q_j, v_x, v_y \rangle$ are not necessarily different, because the agents of the same agent stream might have vertex and edge conflicts. Figure \ref{fig:conflict} presents several examples of the conflicts.

Referencing the MAPF problem, we adopt the sum-of-cost (SOC) for the objective function. Let $C(P)$ denote the SOC of the solution $P$. It can be computed by the equation:
\begin{equation}
    C(P) = \sum_{i=0}^{n-1} (l_i - 1)
\end{equation} 
 An optimal solution is the one that is collision-free and has the lowest SOC compared to other collision-free solutions.

\begin{theorem}
The S-MAPF problem is NP-hard to solve optimally.
\end{theorem}
Theorem 1 can be proven by reduction, and the details are provided in the Supplementary Material.

\begin{algorithm}[tb]
\small
\caption{$ASCBS_{high}$}
\label{alg:high}
\textbf{Input}: agent streams $AS$, graph $G$, cycle time $c$

\begin{algorithmic}[1] 
\STATE  $R$ $\leftarrow$ new node
\STATE  $R.cons$ $\leftarrow$ $\emptyset$
\FOR{ each agent stream $as$ in $AS$ }
\STATE $R.paths[as] \leftarrow$  $ASCBS_{low}(as, R.cons, c)$
\ENDFOR
\STATE $R.cost$ $\leftarrow$ calculate the SOC of $R.paths$.
\STATE $OPEN$ $\leftarrow$ \{$R$\}
\WHILE{$OPEN$ $\neq$ $\emptyset$}
\STATE $N$ $\leftarrow$ minimum cost node from $OPEN$.
\STATE $OPEN$ $\leftarrow$ $OPEN \backslash \{N\}$
\STATE $F$ $\leftarrow$ the best collision in $N$
\IF{$F$ is $None$}
\RETURN $N.paths$
\ENDIF
\STATE $Cons$ $\leftarrow$ build cyclic vertex / edge constraints  or \\  
 \quad \quad \quad  vertex / edge constraints from $F$
\FOR{constraint $con =$  $\overline{(as, v, q_r, q_e)}$ / $\overline{(as, v_i, v_j, q_r, q_e)}$/  \\  \quad \quad \quad $\overline{(as, v, q)}$ / $\overline{(as, v_i, v_j, q)}$ in $Cons$}
\STATE $D$ $\leftarrow$ new node
\STATE $D.cons$ $\leftarrow$ $N.cons$ $\cup$ $con$
\STATE $D.paths[as] \leftarrow$  $ASCBS_{low}(as, D.cons, c)$
\IF{$D.paths[as]$ is not $NULL$}
\STATE $D.cost$ $\leftarrow$ calculate the SOC of $D.paths$
\STATE $OPEN$ $\leftarrow$ $OPEN$ $\cup$ \{$D$\}
\ENDIF
\ENDFOR
\ENDWHILE
\RETURN $NULL$
\end{algorithmic}
\end{algorithm}

\section{Agent Stream Conflict-Based Search}
We present a two-level algorithm, ASCBS, designed to solve the S-MAPF problem. The high-level solver creates a constraint tree (CT) to resolve conflicts from different or the same agent streams. The low-level solver uses the A* algorithm \cite{hart1968formal} to plan the shortest path for an individual agent stream. 
 
\subsection{High-level Solver}
The high-level solver uses CT to avoid conflicts.  Each node in the CT stores the constraint set, the path of each agent stream, and the corresponding SOC. 

The pseudocode of the high-level solver is presented in Algorithm \ref{alg:high}. In lines 1 $\sim$ 4, the low-level solver is invoked at the root node to calculate the shortest path for each agent stream without considering the conflict. During each iteration, the algorithm obtains the CT node with the minimum cost (Line 9) and detects and selects the best conflict (Line 11). The algorithm detects vertex conflicts by Equations \ref{equ:v1}, \ref{equ:ve2}, \ref{equ:v3} and edge conflicts by Equations \ref{equ:e1}, \ref{equ:ve2}, \ref{equ:e3}. The selection of the best conflicts is based on Prioritize Conflicts \cite{boyarski2015icbs}. The prioritization process for cyclic vertex conflicts and cyclic edge conflicts is similar to that for the vertex conflict and the edge conflict. A multi-value decision diagram (MDD) is constructed for each agent stream to evaluate conflict priority. 
By examining the number of occurrences of conflicts at a layer with a width of $1$ in the MDD of two associated agent streams, a conflict with a larger number is given higher priority. 
In Lines 12$\sim$14, if no conflicts are found, the optimal solution is reached, and the algorithm terminates.  In the case of a conflict that involves two distinct agent streams, two cyclic vertex constraints or cyclic edge constraints are generated to resolve the conflict. Alternatively, if the conflict is from the same agent stream, two vertex constraints or edge constraints are established (Line 15). Further details will be explained in the following paragraph. In Lines 16 $\sim$ 24, two child CT nodes are generated, each with one of the two additional constraints. The low-level solver is then utilized to determine the path for the newly constrained agent stream (Line 19). After constructing the child nodes, the algorithm proceeds to the next iteration.

We propose the cyclic vertex constraint and the cyclic edge constraint to resolve the conflict between different agent streams. The cyclic vertex constraint is denoted as $\overline{(as, v, q_r, q_e)}$, indicating that the agent stream $as$ cannot occupy the vertex $v$ at the $q$-th step if $q$ satisfies the equations:
\begin{equation} \label{equ:vcc1}
     q \equiv q_r \ (mod \ c) 
\end{equation}
\begin{equation} \label{equ:vcc2}
     q \neq q_e
\end{equation}
The cyclic edge constraint is denoted as $\overline{(as, v, u, q_r, q_e)}$, meaning that the agent stream $as$ cannot go through the edge from $v$ to $u$ with the $q$-th step if $q$ satisfies the equations:
\begin{equation} \label{equ:ecc1}
     q \equiv q_r \ (mod \ c) 
\end{equation}
\begin{equation}  \label{equ:ecc2}
     q \neq q_e
\end{equation}
Notably, $q_e$ in a cyclic vertex/edge constraint might be $\emptyset$, and in this case, Equations \ref{equ:vcc2} and \ref{equ:ecc2} are always satisfied. 

Here we consider using the constraint to resolve the conflict between different agent streams.
Recall that the cyclic vertex conflict $\langle as_i, as_j, q_i, q_j, v \rangle$ is caused by the ${q_i}$-th step of the path of $as_i$ and the ${q_j}$-th step of the path of $as_j$. When replanning $as_i$'s path, it is not sufficient to just avoid $as_i$ occupying at vertex $v$ at ${q_j}$-th step to resolve the conflict at vertex $v$. This is because the conflict might also occur at $v$ in the $q$-th step of $as_i$'s path where $q \equiv q_i \ (mod \ c)$, according to Equation \ref{equ:ve2}. We should add constraints on all the $q$-th steps that satisfy $q \equiv q_i \ (mod \ c)$. It is symmetric for $as_j$. Specifically, the two child CT nodes are generated respectively as follows to resolve the cyclic vertex conflict $\langle as_i, as_j, q_i, q_j, v \rangle$ where $as_i \neq as_j$:
\begin{itemize}
    \item Add the cyclic vertex constraint $\overline{(as_i, v, q_i, \emptyset)}$ to the constraint set and replan the path of $as_i$.
    \item Add the cyclic vertex constraint $\overline{(as_j, v, q_j, \emptyset)}$ to the constraint set and replan the path of $as_j$.
\end{itemize}
Similarly, to resolve the cyclic edge conflict $\langle as_i, as_j, q_i, q_j, v_i, v_j \rangle$ where $as_i \neq as_j$, the two child CT nodes are operated respectively as follows:
\begin{itemize}
    \item Add the cyclic edge constraint $\overline{(as_i, v_i, v_j, q_i, \emptyset)}$ to the constraint set and replan the path of $as_i$.
    \item Add the cyclic edge constraint $\overline{(as_j, v_j, v_i, q_j, \emptyset)}$ to the constraint set and replan the path of $as_j$.
\end{itemize}

However, when a conflict arises within the same agent stream, we cannot adopt a similar strategy to resolve the conflict using cyclic vertex/edge constraints. This is because the strategy cannot preserve the completeness of the algorithm, as discussed in the Technical Appendix. Instead, we resolve the conflict through the vertex constraint and the edge constraint. The notation $\overline{(as, v, q)}$ represents the vertex constraint, indicating that the agent stream $as$ cannot be located at the vertex $v$ at the $q$-th step. Similarly, the notation $\overline{(as, v, u, q)}$ denotes the edge constraint, which means that the agent stream $as$ cannot go through the edge from the vertex $v$ to the vertex $u$ in the $q$-th step. To resolve the cyclic vertex conflict $\langle as_i, as_j, q_i, q_j , v \rangle$ where $as_i = as_j$, two child CT nodes are generated:
\begin{itemize}
    \item Add the vertex constraint $\overline{(as_i, v, q_i)}$ to the constraint set and replan the path of $as_i$.
    \item Add the vertex constraint $\overline{(as_i, v, q_j)}$ to the constraint set and replan the path of $as_i$.
\end{itemize}
The two child CT nodes are generated to resolve the cyclic edge conflict $\langle as_i, as_j, q_i, q_j, v_i, v_j \rangle$ where $as_i = as_j$.
\begin{itemize}
    \item Add the edge constraint $\overline{(as_i, v_i, v_j, q_i)}$ to the constraint set and replan the path of $as_i$.
    \item Add the edge constraint $\overline{(as_i, v_j, v_i, q_j)}$ to the constraint set and replan the path of $as_i$.
\end{itemize}
\subsection{Low-level solver}
The low-level solver of ASCBS uses an A* algorithm to compute the optimal path of an agent stream under constraints. It is similar to the low-level solver of CBS, where the search progressively explores nodes until reaching the goal vertex. A key distinction is that the ASCBS algorithm's low-level search additionally considers cyclic vertex/edge constraints instead of only vertex/edge constraints. In addition, the Conflict Avoidance Table (CAT) \cite{sharon2015conflict} is employed to help break ties by favoring nodes with fewer conflicts with other agent streams when their $f$ values are equal. Specifically, the nodes use a designated variable to mark the number of collisions along the path from the start vertex to the current vertex. The conflicts with other agent streams at the current step are checked by Equations \ref{equ:v1}, \ref{equ:e1}, \ref{equ:ve2}. Notably, conflicts within the same agent stream are not counted because it needs to construct the entire path from the start vertex to the current vertex, significantly increasing the runtime.

There is an alternative implementation in that the low-level solver uses the Iterative-Deepening-A* (IDA*) algorithm \cite{korf1985depth} to generate the path. The IDA* algorithm utilizes a depth-first search approach, which allows it to maintain the entire path to the current search state. This capability enables the pruning of states when a cyclic vertex/edge conflict arises from the same agent stream during the search. Consequently, the IDA* algorithm can ensure that no collision exists from the same agent stream in the result path.
As a result, the high-level solver only needs to resolve the conflict between agents of different agent streams when adopting the IDA* algorithm as the low-level solver. In contrast, the A* algorithm is unable to prune states with a cyclic vertex/edge conflict during the search because it does not maintain the entire path at every state. Such conflicts can only be resolved by the high-level solver.

\subsection{Disjoint Splitting} \label{subsec:disjoint}
Referencing the improvement technique of the CBS algorithm, we consider adopting the disjoint splitting strategy \cite{li2019disjoint} into the ASCBS algorithm. This strategy involves applying asymmetric constraints to two child CT nodes to divide the problem into two separate subproblems, which helps prevent redundant searches.

In this context, $(as, v, q)$ represents a positive vertex constraint, indicating that the agent stream $as$ must be placed at the vertex $v$ in the $q$-th step. Similarly, $(as, v, u, q)$ denotes a positive edge constraint, indicating that the agent stream $as$ must pass through the edge from vertex $v$ to vertex $u$ at the $q$-th step. 
The positive vertex/edge constraint can impact other agent streams' paths as well as other steps of the current agent stream's path. For example, when $(as_i, v, q_i)$ is added into the constraint set, it restricts another agent stream $as_j$ from being at the vertex $v$ in the ${q_j}$-th step, where $t^s_i + q_i \equiv t^s_j + q_j \ (mod \ c)$. 
Furthermore, the path of $as_i$ cannot be at vertex $v$ at ${q_i}'$-th step where $q^i \equiv {q_i}' \ (mod \ c)$ and $q^i \neq {q_i}'$. 

Specifically, the two child CT nodes are generated to resolve the cyclic vertex conflict $\langle as_i, as_j, q_i, q_j, v \rangle$. Let $k$ represent the agent stream ID for the positive constraint. 
Let $k'$ denote the ID of the newly constrained agent stream in the other child CT node. Formally, If $as_i \neq as_j$, then $k' = \{i, j\} \backslash k$; otherwise, $k' = k$.
\begin{itemize}
    \item Add the positive vertex constraint $(as_k, v, q_k)$, the cyclic vertex constraints $\overline{(as_k, v, q_k, q_k)}$ and $\{\overline{(as_o, v, t^s_k + q_k - t^s_o, \emptyset)}: o \neq k\}$ to the constraint set and replan the path of $as_k'$.
    \item Add the vertex constraint $\overline{(as_k, v, q_k)}$ to the constraint set and replan the path of $as_k$.
\end{itemize}
To resolve the cyclic edge conflict $\langle as_i, as_j, q_i, q_j, v_i, v_j \rangle$, two child CT nodes are generated.
\begin{itemize}
    \item Add the positive edge constraint $(as_k, v_i, v_j, q_k)$, the cyclic edge constraints $\overline{(as_k, v_j, v_i, q_k, q_k)}$ and $\{\overline{(as_o, v_j, v_i, t^s_k + q_k - t^s_o, \emptyset)}: o \neq k\}$ to the constraint set and replan the path of $as_k'$.
    \item Add edge constraint $\overline{(as_k, v_i, v_j, q_k)}$ to the constraint set and replan the path of $as_k$.
\end{itemize}
For the selection of $k$, we randomly choose $k$ from $\{i, j\}$ if $as_i \neq as_j$. Otherwise, $k \leftarrow \arg \min_{o\in\{i,j\}}(q_o)$, where the motivation is that the later step of the path depends on the earlier step of the same path, and it is unreasonable to keep still the agent stream in the later step while altering it in the earlier step.

\begin{figure*}[t]
    \centering
    \includegraphics[width=0.95\textwidth]{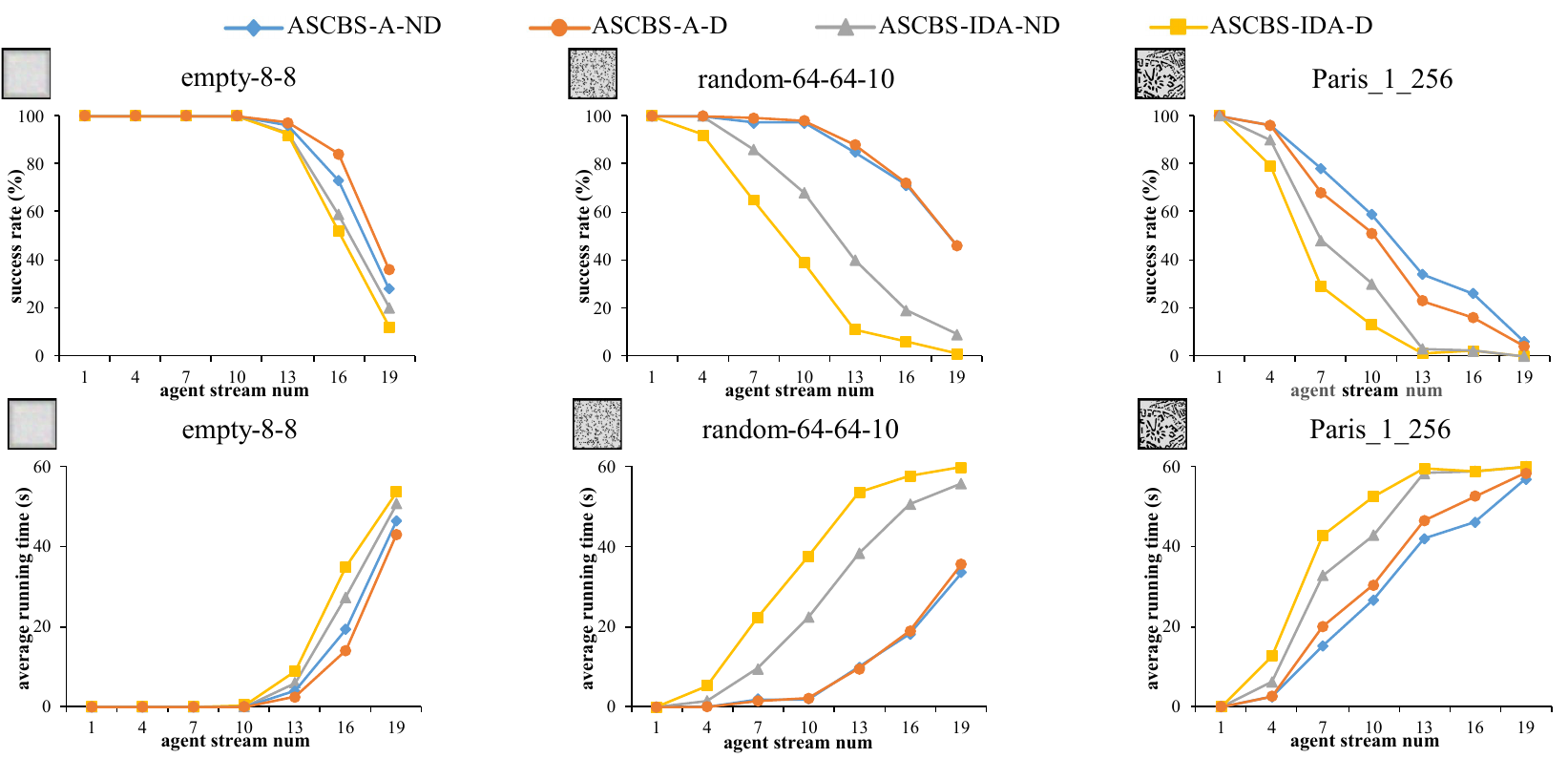}
    \caption{Success rate and average running time for different implementations of ASCBS on the instances with cycle time $3$.}
    \label{fig:exp1}
\end{figure*}

\begin{figure*}[t]
    \centering
    \includegraphics[width=0.95\textwidth]{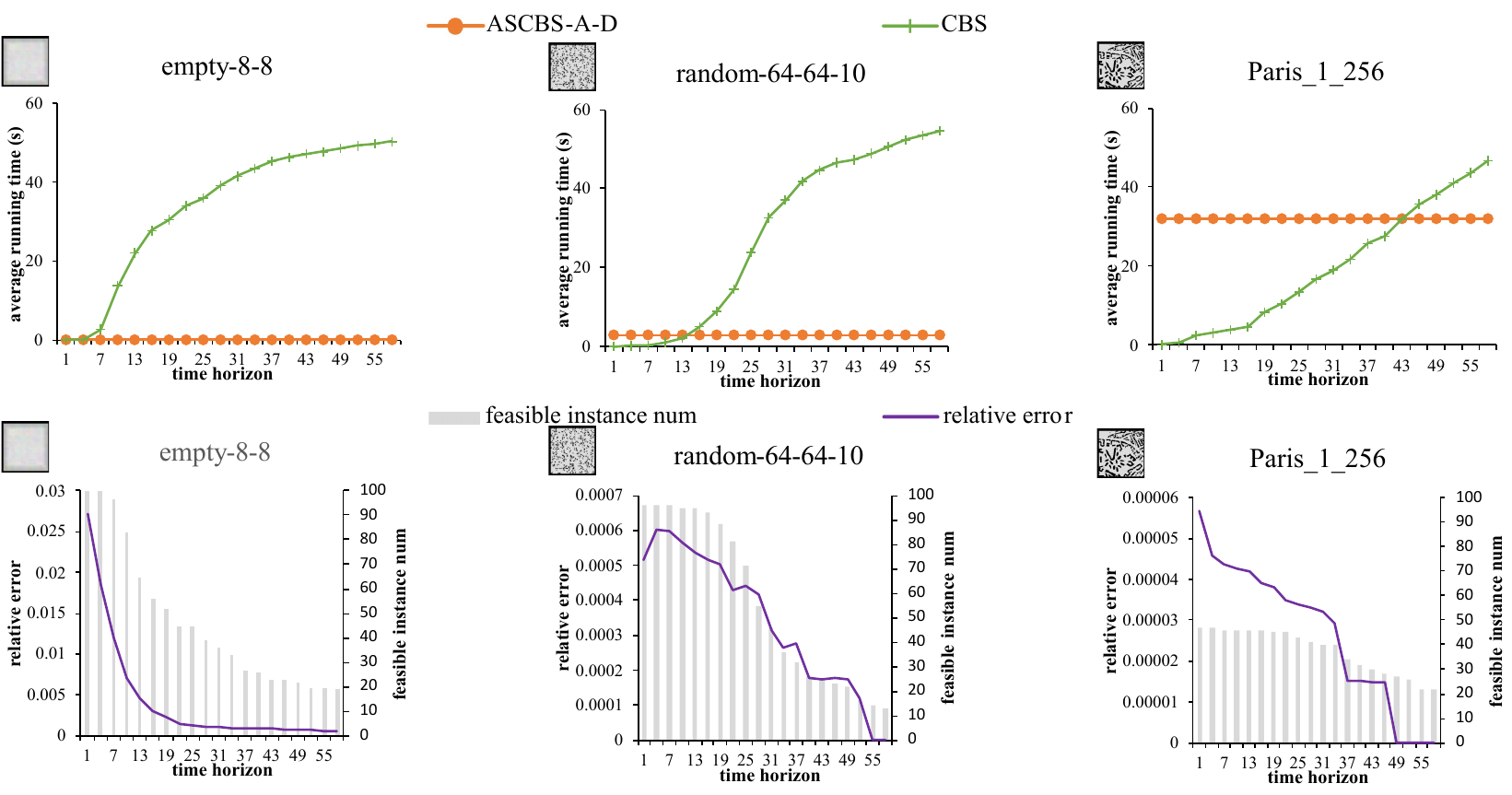}
    \caption{Average running time of ASCBS and CBS, the number of feasible instances by both two algorithms, and average relative error of ASCBS compared to CBS on the instance with $10$ agent streams and cycle time $3$.}
    \label{fig:exp3}
\end{figure*}

\subsection{Theoretical Analysis}

Based on the implementation of the ASCBS, the following theorem holds:
\begin{theorem} \label{theo:2}
The ASCBS algorithm that employs A* as the low-level solver is an optimal and complete solution for the S-MAPF problem.
\end{theorem}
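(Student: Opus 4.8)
The plan is to follow the classical soundness / completeness / optimality template for conflict-based search, with the genuinely new work concentrated in the cyclic constraints and in the same-agent-stream case. \emph{Soundness} is immediate from the high-level loop: a set of paths is returned only at Lines 12--14, when the best-collision routine reports $None$; that routine tests exactly Equations \ref{equ:v1}, \ref{equ:ve2}, \ref{equ:v3} for every candidate vertex conflict and Equations \ref{equ:e1}, \ref{equ:ve2}, \ref{equ:e3} for every candidate edge conflict, so $None$ means no pair of steps realizes a cyclic vertex or edge conflict, i.e. the returned $P$ is a feasible S-MAPF solution.

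For \emph{optimality} and \emph{completeness} (a solution is returned whenever one exists) I would first isolate two facts. \textbf{(A)} For any CT node, the A* low-level solver returns a minimum-length path for the stream that respects every vertex, edge, cyclic vertex (Equations \ref{equ:vcc1}--\ref{equ:vcc2}) and cyclic edge (Equations \ref{equ:ecc1}--\ref{equ:ecc2}) constraint, or reports infeasibility when none exists; the point to check is that beyond the largest step named by an ordinary constraint the constrained search graph is periodic with period $c$, so the relevant state space (vertex $\times$ bounded step horizon) is finite and A* is optimal and terminating on it. Hence every CT node's cost is the SOC of an optimal constraint-respecting joint path, and costs are non-decreasing along every root-to-leaf branch since a child's constraint set contains its parent's. \textbf{(B)} (\emph{covering property}) For any CT node $N$, any feasible solution $P$ consistent with $N.cons$, and any conflict $F$ selected at $N$, $P$ is consistent with the constraint set of at least one of the two children built from $F$. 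Granting (A) and (B), fix an optimal solution $P^\star$ of cost $C^\star$ and maintain the invariant that at the start of each iteration $OPEN$ contains a node $N$ with $N.cons$ consistent with $P^\star$ and $N.cost \le C^\star$: the root satisfies it; if such an $N$ is expanded and is conflict-free it is returned, is feasible by soundness, and has cost $\le C^\star$, hence optimal; if it is split, by (B) one child $D$ is consistent with $P^\star$, so $P^\star$'s path for the replanned stream shows $D.paths$ is not $NULL$ and, by (A), $D.cost \le C^\star$, so $D$ enters $OPEN$; if a different node is expanded, $N$ stays in $OPEN$. Since a minimum-cost node is always removed, the search never expands a node of cost $> C^\star$ and never returns a solution of cost $> C^\star$; and only finitely many CT nodes have cost $\le C^\star$ (path lengths, hence the step indices occurring in constraints, are bounded, so constraints are drawn from a finite universe and the bounded-depth binary CT over it is finite), so the search terminates with an optimal solution. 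The same invariant with $P^\star$ replaced by an arbitrary feasible solution yields completeness, and by soundness the algorithm never returns a false solution when none exists.

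The step I expect to be the main obstacle is establishing the covering property \textbf{(B)}, as this is exactly where the cyclic constraints and the same-stream subtlety bite. For a conflict $\langle as_i, as_j, q_i, q_j, v \rangle$ with $as_i \neq as_j$, the two children forbid $as_i$ from $v$ at every step $\equiv q_i \pmod c$ and $as_j$ from $v$ at every step $\equiv q_j \pmod c$; if $P$ violated both, then in $P$ stream $as_i$ occupies $v$ at some step $q \equiv q_i$ and $as_j$ at some step $q' \equiv q_j$, and applying Equation \ref{equ:ve2} to $F$ gives $t^s_i + q \equiv t^s_j + q' \pmod c$, so Equation \ref{equ:v2} admits nonnegative solutions $k_i, k_j$ while Equation \ref{equ:v3} is automatic since $i \neq j$ --- a genuine cyclic vertex conflict in $P$, a contradiction. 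For a same-stream conflict $\langle as_i, as_i, q_i, q_j, v \rangle$ the children instead use the \emph{ordinary} constraints $\overline{(as_i, v, q_i)}$ and $\overline{(as_i, v, q_j)}$; one first observes that Equations \ref{equ:v2}--\ref{equ:v3} here force $q_i \neq q_j$ yet $q_i \equiv q_j \pmod c$, so if $P$ placed $as_i$ on $v$ at both step $q_i$ and step $q_j$ there would again be nonnegative $k_i \neq k_j$ witnessing a conflict --- and this also explains why the cyclic construction must \emph{not} be used here, since a cyclic constraint at $q_i$ would additionally forbid the step $q_j$ and could delete every feasible path of $as_i$. The edge cases go through identically via Equations \ref{equ:e1}--\ref{equ:e3}, the only extra point being that the two same-stream children carry the oppositely oriented constraints $\overline{(as_i, v_i, v_j, q_i)}$ and $\overline{(as_i, v_j, v_i, q_j)}$ to match the two traversal directions of a cyclic edge conflict; and the disjoint-splitting variant is covered by the analogous argument, using that for feasible solutions a positive constraint $(as_k, v, q_k)$ is equivalent to the conjunction of the cyclic negative constraints it induces on the other streams and on the other residue-$q_k$ steps of $as_k$, which is precisely the constraint bundle the algorithm attaches to the positive child.
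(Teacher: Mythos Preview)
Your proposal is correct and follows essentially the same approach as the paper: the paper isolates exactly your covering property (B) as its Lemma~\ref{lemma:2} (``a collision-free path satisfying the constraint set of a CT node must also satisfy the constraint set of at least one of its child CT nodes''), proves it by the same contradiction argument you give, pairs it with a short Lemma~\ref{lemma:1} that the added constraints actually eliminate the selected conflict, and then simply appeals to the optimality and completeness of A* and CBS. Your write-up is considerably more explicit than the paper's --- you spell out the best-first invariant, the finiteness of the CT below cost $C^\star$, and the periodicity argument for termination of the low-level A* --- but the skeleton and the key lemma are the same.
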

The proof of Theorem \ref{theo:2} is provided in the Supplementary Material.

\section{Experiment} \label{sec:experiment}
 This section presents experiments conducted to evaluate the computational efficiency and solution quality of the ASCBS on three grid-like graphs selected from the MAPF Benchmark Set  \cite{stern2019mapf}, with varying scaling and features.  
The three chosen graphs are 'empty-8-8' (size: 8 $\times$ 8), 'random-64-64-10' (size: 64 $\times$ 64), and  'Paris\_1\_256' (size: 256 $\times$ 256). 
Each graph includes 25 scenario files that specify the start and goal points of the agents. For every scenario, 4 instances are generated with a randomly assigned initial start time for each agent stream within the range of $[0, c-1]$ for the setting of circle time $c$ and agent number $n$, resulting in a total of 100 instances for each setting. All experiments are performed on a Ubuntu 20.04 computer equipped with an Intel Core i7-8700 CPU running at 3.2 GHz with 32GB of main memory. The code is publicly available at https://github.com/tangmingkai/S-MAPF.

The following algorithms are used in the experiments:
\begin{itemize}
    \item \textbf{ASCBS-A-ND}: ASCBS with A* as the low-level solver and without the disjoint splitting strategy. 
    \item \textbf{ASCBS-A-D}: ASCBS with A* as the low-level solver and adopting the disjoint splitting strategy. 
    \item \textbf{ASCBS-IDA-ND}: ASCBS with IDA* as the low-level solver and without the disjoint splitting strategy. 
    \item \textbf{ASCBS-IDA-D}: ASCBS with IDA* as the low-level solver and adopting the disjoint splitting strategy.
     \item \textbf{CBS}: The conflict-based search with WDG guidance \cite{li2019improved} and mutex reasoning \cite{zhang2022multi}, with the setting that agents disappear upon reaching their goals. It is the solver for the traditional MAPF problem, but ignore the assumption that agents in the same agent stream must share the same action sequence.
\end{itemize}

\subsection{Various Number of Agent Streams}
We assessed the computational efficiency of the different implementations of ASCBS on the instances with varying numbers of agent streams, where the circle time is set at $3$. The average running time and the success rate are used as evaluation metrics. An instance is considered unsuccessful if the running time exceeds 60 seconds, in which case the running time is set at 60 seconds directly.

The results are presented in Figure \ref{fig:exp1}. The performance of ASCBS-A-ND and ASCBS-A-D surpasses that of ASCBS-IDA-ND and ASCBS-IDA-D. 
This difference in performance arises because the IDA* algorithm repeatedly executes the search from the start vertex of an agent stream whenever the bound of the f-value increases. The overhead associated with using the IDA* algorithm as the low-level solver outweighs its advantage, which is that the high-level solver does not need to resolve conflicts from the same agent stream. This finding suggests that resolving conflicts from the same agent streams at the high-level solver is more efficient than resolving them at the low-level solver.
In the empty-8-8 and random-64-64-10, ASCBS-A-D outperforms ASCBS-A-ND, with the gap more pronounced in smaller maps. In contrast, on the larger map (Paris\_1\_256), ASCBS-A-ND outperforms ASCBS-A-D. We believe that the reason is as follows. On the one hand, the strategy that imposes a larger additional constraint set on child CT nodes has a higher efficiency because it is more likely to elevate the lower bound of the optimal SOC of the CT node. A cyclic vertex/edge constraint on the agent stream with path length $l$ and cycle time $c$ can be extracted to approximately $\lfloor \frac{l}{c} \rfloor$ vertex/edge constraint. In the solver with the disjoint splitting strategy, one of the child CT nodes has an additional constraint set that only contains a vertex/edge constraint, which is smaller than that in the solver without disjoint splitting. This makes the non-disjoint splitting strategy more beneficial than the disjoint splitting strategy in terms of computational efficiency. Consequently, the advantage caused by non-disjoint splitting diminishes in a smaller size map due to the shorter path length $l$, resulting in a smaller value $\lfloor \frac{l}{c} \rfloor$.
On the other hand, the disjoint splitting prevents the same conflict recurrence in the subtree rooted by two child CT nodes.
In this view, the disjoint splitting strategy is more beneficial than the non-disjoint splitting strategy. Considering the above two factors, when the map size is small, the advantage caused by the disjoint splitting from the solver with A* search is larger than the disadvantage. Nonetheless, ASCBS-IDA-D is outperformed by ASCBS-IDA-ND, as the computing time of IDA* constitutes a significant portion of the total computing time and is heavily influenced by constraints. Therefore, the advantage of the disjoint splitting strategy in the solver with IDA* cannot cover the disadvantage. 

\subsection{Comparing ASCBS with CBS}
We evaluate the computational efficiency and quality of the solution of ASCBS-A-ND compared to CBS. Since CBS is unable to generate solutions for unlimited working hours, we introduce an integer variable called the time horizon. This variable restricts CBS to considering only the agents whose start time is less than or equal to the time horizon. Our evaluation includes instances with $10$ agent streams and circle time $3$. We measure computational efficiency using average running time. We named the instance feasible by the algorithm if its running time is within 60 seconds and can produce a feasible solution. To evaluate solution quality, we calculate the average relative error of the SOC among instances feasible by both algorithms. Notably, the traditional MAPF problem has less constraint than the S-MAPF problem, resulting in a no larger SOC in CBS than in ASCBS.

Figure \ref{fig:exp3} displays the average running time of ASCBS and CBS, the number of instances feasible by both algorithms and the average relative error of ASCBS compared to CBS. It is noteworthy that the solution of ASCBS can be applied across all time horizons, leading to consistent average running times. As the time horizon increases, CBS's running time scales due to the larger number of agents considered. In contrast, the relative error of ASCBS compared to CBS is very small and decreases as the time horizon grows. This trend suggests that when the working hours are extended, the difference in solution quality between ASCBS and CBS diminishes, supporting the assumption that ASCBS is more suitable to apply in the scenario with long working hours compared to CBS.

\section{Extensions}
The formalization of the S-MAPF problem is based on several strong assumptions. In this section, we introduce extensions of the S-MAPF problem to relax these assumptions. It is evident that the solver for the basic S-MAPF problem can be adapted to solve the extended problem. We denote the basic S-MAPF problem as $P1$.

\subsection{Stay in the Environment}
The S-MAPF problem assumes that agents in the agent stream appear at the starting vertex at specific time steps and disappear upon reaching the goal vertex. This assumption is suitable for the intersection scenarios and the environments that have some private parking zones at both the start and goal vertices for robot loading and unloading \cite{vsvancara2019online}. An extension of the S-MAPF problem (denoted as $P2$) relaxes this assumption, by requiring all agents to remain in the environment throughout the entire working process.

Several methods can be employed to solve the extended problem. One method is to generate a return path for each agent stream, dividing the $P2$ into two stages. In the first stage, a path is generated for each agent stream from its start vertex to its goal vertex, similar to $P1$. The second stage focuses on generating a path for each agent stream from the goal vertex back to the start vertex, ensuring that this path avoids conflicts with the paths established in the first stage. Additionally, the path length in the second stage must adhere to a specific constraint to maintain the cyclic pattern. Let $l_1^i$ and $l_2^i$ represent the length of the path generated in the first and second stages for agent stream $i$, receptively. These path lengths must satisfy the condition $\forall i, l_1^i + l_2^i \equiv 0 \, \, (mod \, \, c)$. In the implementation, the ASCBS algorithm can enhance the goal-reaching conditions in the low-level solver to ensure that this condition is met.

An alternative method for the $P2$ is to jointly generate the paths in both stages, ensuring that the path lengths are multiples of $c$. This can be achieved by employing a multi-label A* algorithm \cite{grenouilleau2019multi} as the low-level solver within the ASCBS algorithm. 

A more advanced method for the $P2$ is to allow for the concentration of multiple agent streams. In this method, an agent that has reached the goal vertex of an agent stream can subsequently move to the start vertex of another agent stream. Future research will focus on this category of methods.


\subsection{Non-Uniform Cycle Time}
In the context of $P1$, we assume that all agent streams operate on the same cycle time. This assumption restricts the applicability of the S-MAPF problem to scenarios that require non-uniform cycle times. In the case of the problem with non-uniform cycle time agent streams (denoted as $P3$), conflicts also display in a periodic pattern, similar to that observed in $P1$. Consequently, we can extend the definition of cyclic conflict and constraints of the ASCBS algorithm to accommodate non-uniform cycle times. Further details are available in the Supplementary Material.

\section{Conclusion}
This work formalized the S-MAPF problem, which assumes that agents in the same agent stream have a periodic start time and use the same action sequence. To solve the S-MAPF problem, an optimal and complete algorithm, ASCBS, was introduced, which includes cyclic vertex/edge constraints. The potential of the disjoint splitting strategy in the ASCBS algorithm was also investigated. Experimental comparisons were conducted to evaluate the performance of different implementations of ASCBS, indicating that resolving conflicts within the same agent stream in the high-level solver is more effective than in the low-level solver, and disjoint splitting is advantageous when the map size is small. Moreover, experiments demonstrated that when the working hour is long, ASCBS outperforms CBS in terms of runtime while maintaining a slightly lower or equal solution quality. Finally, we presented several extensions of the S-MAPF problem. The solver developed for the basic S-MAPF problem can be adapted to solve these extended problems.

\bibliographystyle{named}
\bibliography{ijcai25}
\appendix
\setcounter{theorem}{0}

\begin{figure*}[!t]
    \centering
    \includegraphics[width=0.98\textwidth]{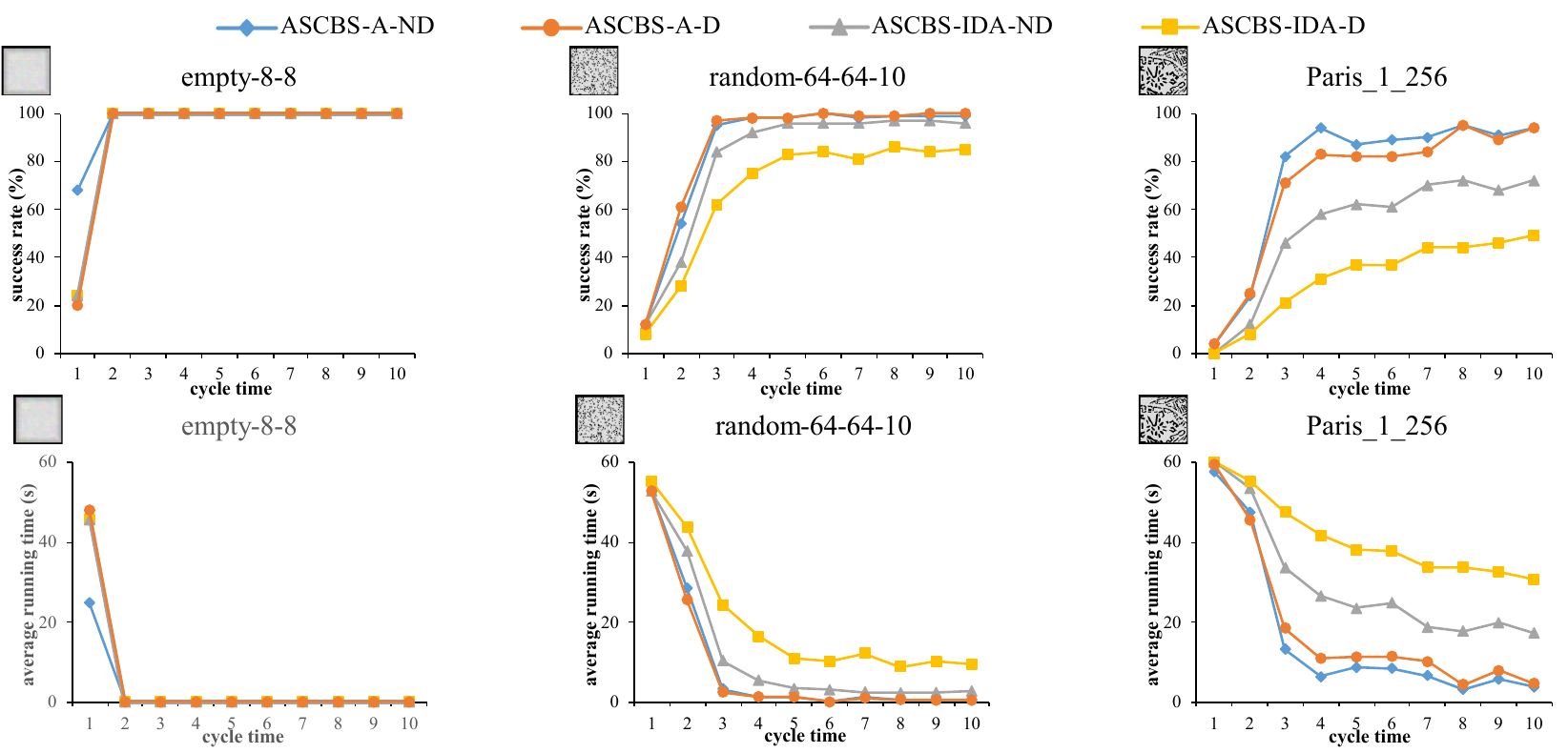}
    \caption{Success rate and average running time for different implementations of ASCBS on the instances with $7$ agent stream.}
    \label{fig:exp2}
\end{figure*}

\section{Theoretical Analysis}
\subsection{Streaming Multi-agent Pathfinding (S-MAPF)}
This subsection provides a theoretical analysis of the S-MAPF problem.
\begin{lemma} \label{lemma:mapf}
The MAPF problem, in which agents disappear upon reaching their goals, is NP-hard to solve optimally.  
\end{lemma}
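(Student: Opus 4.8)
The plan is to prove Lemma~\ref{lemma:mapf} by a polynomial-time reduction from \textsc{3-SAT}, in the spirit of the classical intractability arguments for optimal MAPF, but taking care that every forced collision in the construction is a \emph{simultaneous-occupancy} or \emph{swap} collision rather than one created by an agent resting on its goal vertex. I would argue about the decision version: given $\langle G, \{(v^s_i, v^g_i)\}_i, K\rangle$ with all agents starting at time $0$ and disappearing upon arrival, decide whether there is a collision-free plan of sum-of-costs at most $K$; showing this problem is NP-hard immediately yields the lemma.

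Given a 3-CNF formula $\varphi$ with variables $x_1,\dots,x_n$ and clauses $C_1,\dots,C_m$, I would assemble $G$ from three kinds of gadgets. A \emph{variable gadget} for $x_i$ holds one agent whose start and goal are joined by exactly two internally vertex-disjoint shortest paths of equal length --- a ``true'' branch and a ``false'' branch --- so that in any cost-$K$ plan this agent commits to one of them, thereby encoding a truth value, while any deviation costs at least one extra step. A \emph{clause gadget} for $C_j$ contains a dedicated ``checker'' agent that must traverse a narrow bottleneck; the bottleneck vertices are shared, at a carefully calibrated time step, with the branches of the three variable agents corresponding to the literals of $C_j$, arranged so that the checker can pass along a shortest path iff at least one of those literal agents is routed \emph{away} from the bottleneck, i.e.\ the literal is satisfied. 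Length-calibrated padding corridors ensure that all relevant crossings happen at compatible time steps, and the global threshold is set to $K = \sum_i \mathrm{dist}_G(v^s_i, v^g_i)$, forcing every agent onto a shortest path. Then a plan of cost $\le K$ exists iff $\varphi$ is satisfiable: the forward direction reads the assignment off the variable agents and the clause checkers certify each clause, and the converse routes each variable agent along the branch dictated by a satisfying assignment and each checker through its now-unobstructed bottleneck.

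The hard part --- and the point that departs from off-the-shelf MAPF hardness proofs --- is the ``disappear at goal'' assumption: one cannot create congestion by parking an agent on a vertex, and conversely one must rule out that letting agents vanish early opens an unintended way to meet the budget $K$ without satisfying $\varphi$. I would address the first issue by realizing all blocking through synchronized transit: using calibrated detour/padding corridor lengths so that a ``false-routed'' literal agent provably occupies a bottleneck vertex exactly at the time step the corresponding checker would need it, a conflict that occurs strictly before either agent reaches its goal, so disappearing is irrelevant to it. For the second issue I would keep every agent's two candidate shortest paths disjoint from all other gadgets except at the designated bottleneck vertices, so the only interactions are the intended ones and an agent's early disappearance cannot shorten anyone else's route. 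Finally I would check that $G$, the agent set, and $K$ are polynomial in $|\varphi|$ (and, if a grid-graph version is wanted to match the benchmark setting, that $G$ can be embedded in a grid with only polynomial blow-up), and that the whole transformation is computable in polynomial time, which completes the argument.
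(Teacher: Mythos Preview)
Your proposal is plausible in outline but takes a much more laborious route than the paper. The paper's proof is essentially a one-paragraph observation: it cites the existing NP-hardness reduction of Banfi et al.\ (2017) from SAT to optimal MAPF and notes that in \emph{that} construction, no candidate optimal solution ever requires an agent to pass through another agent's goal vertex. Hence whether agents remain parked at their goals or disappear upon arrival is immaterial to the correctness of the reduction, and NP-hardness carries over verbatim to the disappearing variant. In other words, the paper does not build a new reduction at all; it argues that the off-the-shelf one is already robust to the rule change.

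By contrast, you propose to reconstruct a 3-SAT reduction from scratch, with variable and clause gadgets and calibrated padding, specifically engineered so that disappearing agents cannot be exploited. This mirrors the standard MAPF-hardness architecture and could be made to work, but it is considerably more effort, and your sketch still leaves the delicate synchronization details (exact corridor lengths, time-step alignment across all clause--variable incidences, and verifying that no pair of shortest paths interacts outside the intended bottlenecks) to be checked --- precisely where such constructions tend to break. The paper's approach buys brevity and rigor by delegating to a published construction and verifying one structural property of it; yours would buy self-containment at the cost of a full gadget design and correctness proof. The key insight you are missing is that you need not redesign anything: it suffices to inspect the Banfi et al.\ instances and observe that optimal plans there never route an agent through another agent's goal, so the ``disappear at goal'' rule is simply irrelevant to that reduction.
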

\begin{proof}
The proof of NP-hardness for optimally solving the  MAPF problem is provided by [Banfi et al., 2017].In this proof, a reduction from the satisfiability problem (SAT) is employed to create MAPF instances, where all candidate optimal solutions do not require passing through the goals of other agents. Therefore, the setting related to reaching the goal is independent of the proof. This indicates that the NP-hardness of the MAPF problem is independent of whether agents stay at the goal or disappear.
\end{proof}
\begin{theorem}
The S-MAPF problem is NP-hard to solve optimally.
\end{theorem}
\begin{proof}
The MAPF problem can be reduced to the S-MAPF problem through the following steps.
Initially, a feasible solution of any solvable MAPF problem instance can be calculated in a polynomial-time complexity, and its SOC is bounded by $O(|V|^3)$ \cite{kornhauser1984coordinating}. Therefore, the makespan ($max_{i=0}^{n-1}(l_i-1))$ is also bounded by $O(|V|^3)$, where $l_i$ is the path length of agent $i$. Let $w$ represent the makespan of that feasible solution. Subsequently, an S-MAPF instance can be created by using the same graph as the MAPF instance, and the cycle time is set to be an integer larger than $w$. The start and goal vertices of the agent stream in the S-MAPF instance are the same as those of the agent in the MAPF instance, with the initial start time set to $0$. Consequently, the optimal path of the S-MAPF problem is the same as the optimal path of the MAPF problem because of the sufficiently large cycle time, ensuring that each agent completes its path before the subsequent agent appears, preventing interference. 
Since optimally solving the MAPF problem is recognized as NP-hard, as stated in Lemma \ref{lemma:mapf}, the NP-hardness of achieving optimal solutions for the S-MAPF problem is established. \end{proof}

\subsection{Agent Stream Conflict-Based Search (ASCBS)}
This subsection will prove the optimality and completeness of the Agent Stream Conflict-Based Search (ASCBS) algorithm. In the proof of the following two lemmas, we provide only the proof of the cyclic vertex conflict, while the proof of the cyclic edge conflict is omitted as it is symmetric. Let $\langle as_i, as_j, q_i, q_j, v \rangle$ represent the current cyclic vertex conflict.

\begin{lemma} \label{lemma:1}
The newly added constraints can resolve the corresponding conflict in the constraint tree (CT) node.
\end{lemma}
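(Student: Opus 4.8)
The plan is to argue case by case over the ways in which the two child CT nodes are generated that, in each child, at least one of the equations defining the conflict $\langle as_i, as_j, q_i, q_j, v\rangle$ — in practice Equation~\ref{equ:v1} — can no longer hold. The fact I would invoke throughout is that the low-level solver returns, for the agent stream it replans, a path satisfying every constraint currently stored in that node; combined with the congruence Equation~\ref{equ:ve2} relating $q_i$ and $q_j$, this pins down exactly which step index of the replanned path is barred from visiting $v$. A step index exceeding the replanned path length is equally harmless, since then the corresponding agent has already disappeared and cannot participate in the conflict.

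First I would treat the standard (non-disjoint) split with $as_i \neq as_j$. In the child that adds $\overline{(as_i, v, q_i, \emptyset)}$ and replans $as_i$, Equations~\ref{equ:vcc1}--\ref{equ:vcc2} (with $q_e = \emptyset$) force the new path of $as_i$ to avoid $v$ at every step $q$ with $q \equiv q_i \pmod c$, hence $p_i^{q_i} \neq v$ and Equation~\ref{equ:v1} fails; the sibling child is symmetric under exchange of $i$ and $j$. When $as_i = as_j$, the children add the ordinary vertex constraints $\overline{(as_i, v, q_i)}$ and $\overline{(as_i, v, q_j)}$, each of which directly forces $p_i^{q_i}\neq v$ or $p_i^{q_j}\neq v$ and so breaks Equation~\ref{equ:v1}. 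Here I would also record that a same-stream conflict forces $q_i\equiv q_j\pmod c$ with $q_i\neq q_j$ (from Equations~\ref{equ:ve2} and~\ref{equ:v3}), which is precisely why a single cyclic constraint would be unsound and the weaker ordinary constraints are used.

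Second I would handle the two asymmetric children of the disjoint-splitting variant. The child carrying only $\overline{(as_k, v, q_k)}$ is immediate: the replanned path of $as_k$ has $p_k^{q_k}\neq v$, so Equation~\ref{equ:v1} fails. In the child carrying the positive constraint $(as_k, v, q_k)$ together with $\overline{(as_k, v, q_k, q_k)}$ and $\{\overline{(as_o, v, t^s_k+q_k-t^s_o, \emptyset)} : o\neq k\}$: if $as_i\neq as_j$, the replanned stream is $as_{k'}$ with $k'=\{i,j\}\setminus\{k\}$, subject to $\overline{(as_{k'}, v, t^s_k+q_k-t^s_{k'}, \emptyset)}$, and Equation~\ref{equ:ve2} rewrites as $q_{k'}\equiv t^s_k+q_k-t^s_{k'}\pmod c$, so $p_{k'}^{q_{k'}}\neq v$ and Equation~\ref{equ:v1} fails; if $as_i=as_j$, then $q_i\equiv q_j\pmod c$ with $q_i\neq q_j$ as above, and $\overline{(as_k, v, q_k, q_k)}$ forbids $as_k$ from occupying $v$ at whichever of $\{q_i,q_j\}$ is not $q_k$, so Equation~\ref{equ:v1} again fails. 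The cyclic edge conflict is argued identically, with ``traversing the edge $(v_x,v_y)$ at step $q$'' replacing ``occupying $v$ at step $q$'' and the edge constraints replacing the vertex constraints.

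I expect the main obstacle to be the bookkeeping in the disjoint case: one must verify that the auxiliary constraint imposed on the non-positively-constrained stream really targets the conflicting step $q_{k'}$, which rests on reading Equation~\ref{equ:ve2} in the form $q_{k'}\equiv t^s_k+q_k-t^s_{k'}\pmod c$ and on observing that $t^s_k+q_k-t^s_{k'}$ being negative causes no trouble, since the congruence still admits the nonnegative solution $q=q_{k'}$, which is the one that matters. Past that point the argument is a direct unwinding of the constraint semantics (Equations~\ref{equ:vcc1}--\ref{equ:vcc2}) against the conflict definition (Equations~\ref{equ:v1},~\ref{equ:ve2},~\ref{equ:v3}).
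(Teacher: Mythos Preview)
Your proposal is correct and follows the same case-by-case strategy as the paper's own proof, only with considerably more care: the paper merely remarks that the cyclic constraint is a superset of the ordinary vertex constraint (so $p_i^{q_i}\neq v$), and for the disjoint-splitting positive child simply says the added constraints ``drive away the path that might conflict,'' whereas you explicitly chase the congruence Equation~\ref{equ:ve2} to show that the auxiliary constraint on $as_{k'}$ hits exactly step $q_{k'}$ and that $\overline{(as_k,v,q_k,q_k)}$ hits the other of $\{q_i,q_j\}$ in the same-stream case. Nothing is missing, and your extra bookkeeping (the sign of $t^s_k+q_k-t^s_{k'}$, the disappeared-agent remark) is sound but not strictly needed for the lemma as stated.
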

\begin{proof}
Now we consider the ASCBS algorithm without the disjoint splitting strategy. When $as_i \neq as_j$, due to the cyclic vertex constraint $\overline{(as, v, q, \emptyset)}$ being a superset of the vertex constraint $\overline{(as_i, v, q_i)}$, the replanned path of the agent stream $as_i$ avoids being located at vertex $v$ at $q_i$-th step to resolve the conflict. So as the agent stream $as_j$.
In the case where $as_i = as_j$, either the $q_i$-th step or the $q_j$-th step, driving from vertex $v$, can avoid the conflict. 

Considering the ASCBS algorithm with the disjoint splitting strategy, one of the child CT nodes employs a positive constraint to force the agent stream  to occupy the vertex $v$ at the corresponding step and drives away the path that might conflict with this constraint. The other child CT node avoids the agent stream being located at vertex $v$ at the corresponding step to resolve the conflict.
\end{proof}

\begin{lemma} \label{lemma:2}
A collision-free path satisfying the constraint set of a CT node must also satisfy the constraint set of at least one of its child CT nodes. 
\end{lemma}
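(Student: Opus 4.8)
The plan is to argue by cases on the type of split that produced the children and, within each, to show that any collision-free path $P$ respecting the parent's constraint set $N.cons$ cannot simultaneously violate the added constraints of \emph{both} children; hence it satisfies at least one child's constraint set (which is $N.cons$ plus the added constraints). Throughout I fix the conflict $F$ selected at node $N$; by Lemma~\ref{lemma:1} the two children's added constraints each individually eliminate $F$, so the only thing to verify is the ``at least one'' disjunction, i.e.\ that the two added constraint sets are jointly exhaustive on collision-free paths.

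First I would handle the non-disjoint case with a cyclic vertex conflict $\langle as_i, as_j, q_i, q_j, v\rangle$, $as_i \neq as_j$. The children add $\overline{(as_i, v, q_i, \emptyset)}$ and $\overline{(as_j, v, q_j, \emptyset)}$ respectively. Suppose, toward contradiction, that $P$ violates both: then $p_i$ is at $v$ at some step $q$ with $q \equiv q_i \pmod c$, and $p_j$ is at $v$ at some step $q'$ with $q' \equiv q_j \pmod c$. Using $t^s_i + q_i \equiv t^s_j + q_j \pmod c$ (Equation~\ref{equ:ve2}, which holds because $F$ was a genuine conflict) I get $t^s_i + q \equiv t^s_j + q' \pmod c$, so there exist nonnegative integers $k_i, k_j$ with $(k_i c + t^s_i) + q = (k_j c + t^s_j) + q'$; choosing them large I can also force $(i-j)^2 + (k_i-k_j)^2 \neq 0$ (indeed $i \neq j$ already suffices). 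That is exactly a cyclic vertex conflict in $P$, contradicting collision-freeness. The edge case is symmetric via Equations~\ref{equ:e1}--\ref{equ:e3}. For the same-stream case $as_i = as_j$, the children add the plain vertex constraints $\overline{(as_i, v, q_i)}$ and $\overline{(as_i, v, q_j)}$; if $P$ violates both then $p_i$ is at $v$ at step $q_i$ \emph{and} at step $q_j$, and since $t^s_i + q_i \equiv t^s_i + q_j \pmod c$ forces $q_i \equiv q_j \pmod c$ with $q_i \neq q_j$ (otherwise $F$ would not be a conflict), we again obtain a cyclic vertex conflict within $as_i$ in $P$ — contradiction.

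Next I would treat the disjoint-splitting case. Let $k$ be the stream carrying the positive constraint and $k'$ the replanned stream in the negative child. One child adds the negative constraint $\overline{(as_k, v, q_k)}$; the other adds the positive constraint $(as_k, v, q_k)$ together with $\overline{(as_k, v, q_k, q_k)}$ and $\{\overline{(as_o, v, t^s_k + q_k - t^s_o, \emptyset)} : o \neq k\}$. If $P$ violates the negative child's added constraint, then $p_k$ \emph{is} at $v$ at step $q_k$, so $P$ satisfies the positive constraint $(as_k, v, q_k)$; it remains to check $P$ also satisfies the accompanying negative constraints in that same child. The constraint $\overline{(as_k, v, q_k, q_k)}$ forbids $as_k$ at $v$ at steps $\equiv q_k \pmod c$ \emph{other than} $q_k$ itself, and the constraints $\overline{(as_o, v, t^s_k + q_k - t^s_o, \emptyset)}$ forbid every other stream from $v$ at the matching residue step; a violation of any of these would, by the same modular-arithmetic-plus-large-$k$ argument as above, produce a cyclic vertex conflict with the agent of $as_k$ sitting at $v$ at step $q_k$ — contradicting that $P$ is collision-free. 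Hence if $P$ violates the negative child it must satisfy the positive child, and trivially if it does not violate the negative child it satisfies that one. The edge variant is identical with $(v_i,v_j)$ in place of $v$.

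The main obstacle I anticipate is the disjoint-splitting case: one has to verify that the \emph{bundle} of negative constraints attached to the positive child is not over-restrictive — that is, that a genuinely collision-free path which places $as_k$ at $v$ at step $q_k$ cannot be accidentally excluded by one of the auxiliary $\overline{(as_o, v, \cdot, \emptyset)}$ constraints. The key observation making this go through is that each such auxiliary constraint forbids exactly the steps of $as_o$ that \emph{would} collide (cyclically) with $as_k$ being at $v$ at step $q_k$, so any path hitting them is already non-collision-free given that $as_k$ occupies $v$ at $q_k$; the residue bookkeeping (Equation~\ref{equ:ve2}) and the freedom to pick $k_i, k_j$ arbitrarily large to satisfy Equation~\ref{equ:v3} are what formalize this. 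A minor secondary point is making sure the argument is stated uniformly for vertex and edge conflicts; I would do vertex in full and remark that the edge case is symmetric, matching the paper's convention.
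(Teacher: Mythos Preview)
Your proposal is correct and follows essentially the same approach as the paper: contradiction by showing that violating both children's added constraints forces a cyclic vertex (or edge) conflict via Equation~\ref{equ:ve2}, treated separately for the non-disjoint split with $as_i\neq as_j$, the same-stream split, and the disjoint split (where the bundle of auxiliary negatives in the positive child is shown to forbid exactly what would collide with $as_k$ at $v$ at step $q_k$). You are somewhat more explicit than the paper in the same-stream case and in spelling out why the auxiliary constraints attached to the positive child cannot over-restrict a collision-free path, but the structure and key ideas are the same.
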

\begin{proof}

In the ASCBS algorithm without the disjoint splitting strategy, we prove the lemma by contradiction in the case of $as_i \neq as_j$. If the lemma is false, then there exists a collision-free path that satisfies the constraint set of the parent CT node but does not satisfy both of the cyclic vertex constraints $\overline{(as_i, v, q_i, \emptyset)}$ and $\overline{(as_j, v, q_j, \emptyset)}$. Let $q_i'$ be such that $q_i' \equiv q_i \ (mod \ c)$ and the path of the agent stream $as_i$ will be located at $v$ in the ${q_i'}$-th step. Similarly, let $q_j'$ be such that $q_j' \equiv q_j \ (mod \ c)$ and the path of the agent stream $as_j$ will be located at $v$ at ${q_j'}$-th step. According to Equations 1, 3, and 7 in the main text, the cyclic vertex conflict $\langle as_i, as_j, q_i', q_j', v \rangle$ occurs, making the path not conflict-free. Thus, the lemma is proven in this case. The proof for the case of $as_i = as_j$, where two child CT nodes are generated using the vertex, follows the same logic as in the CBS algorithm.

Let $k$ and $k'$ have the same meaning as in Subsection 4.3 in the main text. 
In the ASCBS algorithm with the disjoint splitting strategy, if the lemma is false, then the agent stream $as_k$ is at the vertex $v$ in the $q_k$-th step. Furthermore, one of the following two situations must hold true: The first situation is that there exists a $q_k' \neq q_k$ such that $q_k' \equiv q_k \ (mod \ c)$ and the path of the agent stream $k$ is located in $v$ in the ${q_k}$-th step. The second situation is that there exists an agent stream $as_o$ and $q_o$ such that $q_o+t^s_o \equiv q^s_k + q_k\ (mod \ c)$, $o \neq k$, and the path of agent stream $as_o$ is located at $v$ at ${q_o}$-th step. Both situations lead to a conflict with the agent stream $as_k$, resulting in the path being non-collision-free.


\end{proof}

\begin{theorem}
The ASCBS algorithm that employs A* as the low-level solver is an optimal and complete solution for the S-MAPF problem.
\end{theorem}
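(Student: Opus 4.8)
The plan is to lift the classical optimality-and-completeness meta-argument for CBS (Sharon et al., 2015) to ASCBS, now driven by Lemmas~\ref{lemma:1} and~\ref{lemma:2}, which already supply the two structural facts that make that argument work. Concretely, I would reduce the theorem to three ingredients: (i) a \emph{lower-bound property} of CT-node costs, (ii) \emph{monotonicity} of cost along CT edges together with the fact that $ASCBS_{high}$ is a best-first search on cost, and (iii) a \emph{finiteness} bound on the number of CT nodes whose cost does not exceed the optimum. Lemma~\ref{lemma:1} then guarantees branching is meaningful and Lemma~\ref{lemma:2} guarantees that no collision-free solution is ever lost when a node is split.

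First I would prove the lower-bound property: for every CT node $N$, the stored value $N.cost$ is a lower bound on the SOC of \emph{any} solution (collision-free or not) whose paths satisfy $N.cons$. This is because the A* low-level solver, run on the space-time state space in which the search state records the step index $q$, returns for each agent stream a path of minimum cost among all paths consistent with $N.cons$ — a cyclic vertex/edge constraint is simply the restriction ``forbid this step whenever $q\equiv q_r\ (\mathrm{mod}\ c)$ and $q\neq q_e$'' on the successor relation, and a positive constraint is an analogous restriction, so standard A* optimality with an admissible heuristic applies. Summing these per-stream minima yields $N.cost$, which is therefore $\leq$ the SOC of any constraint-consistent solution; in particular $R.cost\leq C^\ast$ at the root, where $C^\ast$ is the optimal SOC. Monotonicity is immediate: a child node only adds constraints and re-plans one stream, whose constrained optimum can only grow, while the others are untouched (under disjoint splitting the positively constrained stream $as_k$ retains a path that still meets its new constraints, so no decrease occurs there either). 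Hence $ASCBS_{high}$ expands nodes in non-decreasing cost order.

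For completeness and optimality, fix any collision-free solution $P$ (the instance being solvable) with $C^\ast=C(P)$ optimal. I would show the invariant that $OPEN$ always contains a node $N$ with $P$ satisfying $N.cons$ and $N.cost\leq C^\ast$: it holds at $R$ by the lower-bound property, and is preserved when such an $N$ is expanded, since by Lemma~\ref{lemma:2} some child $D$ has $P$ satisfying $D.cons$, the low-level solver does not return $NULL$ for $D$ (the relevant path of $P$ witnesses feasibility), and $D.cost\leq C(P)=C^\ast$ again by the lower-bound property. By Lemma~\ref{lemma:1} each split genuinely addresses the selected conflict, and — the crucial point — there are only finitely many CT nodes of cost $\leq C^\ast$: an optimal solution has every path length $\leq C^\ast+1$, so constraints on steps $q$ beyond this bound are irrelevant, the residues $q_r$ of cyclic constraints lie in $[0,c-1]$, and the vertex/edge/positive constraint templates over $V$, $E$, the $n$ streams, and those finitely many step values exhaust all constraint sets reachable below cost $C^\ast$. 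Best-first search therefore cannot expand infinitely many nodes of cost $\leq C^\ast$ without dequeuing one for which the best collision is $None$; its paths are then collision-free (the high-level detector tests cyclic vertex/edge conflicts via Equations~\ref{equ:v1},\ref{equ:e1},\ref{equ:ve2},\ref{equ:v3},\ref{equ:e3}, including same-stream ones), so they are returned. Being the minimum-cost node while $OPEN$ still held a node of cost $\leq C^\ast$, the returned node has cost $\leq C^\ast$; being a feasible solution, it has cost $\geq C^\ast$; hence it is an optimal solution. For an unsolvable instance the argument is the standard one for CBS: the search returns $NULL$ once $OPEN$ is exhausted.

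I expect the finiteness step (iii) to be the main obstacle: one must argue carefully that the cyclic constraints, whose parameter $q_r$ is only defined modulo $c$ and which each forbid an entire arithmetic progression of steps, together with the positive constraints injected by disjoint splitting, still generate only finitely many distinct constraint sets among nodes of cost at most $C^\ast$, rather than an unbounded family of relevant restrictions. Once that bound and Lemmas~\ref{lemma:1}–\ref{lemma:2} are in place, the remainder is a routine transcription of the CBS proof, so I would keep those parts brief.
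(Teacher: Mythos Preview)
Your proposal is correct and follows essentially the same approach as the paper: both reduce the theorem to Lemmas~\ref{lemma:1} and~\ref{lemma:2} together with the standard optimality and completeness of A* and of the CBS best-first scheme. The paper's own proof is in fact a single sentence invoking exactly those ingredients, so your write-up is simply a more explicit unpacking of the same argument rather than a different route.
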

\begin{proof}
    By combining Lemmas \ref{lemma:1} and \ref{lemma:2} with the optimality and completeness of the A* algorithm and the CBS algorithm, the optimality and completeness of the ASCBS are proven. This proof covers the ASCBS algorithm with and without the disjoint splitting strategy, according to Lemmas \ref{lemma:1} and \ref{lemma:2}.
\end{proof}

\section{Further Experimental Result}
We assess the computational efficiency of different implementations of ASCBS on instances with varying cycle times, with a fixed number of agent streams at $7$. We use average running time and success rate as performance metrics.

Figure \ref{fig:exp2} presents the result. As the cycle time increases, the success rate tends to rise while the running time decreases. This trend is due to the sparser distribution of agents and the resulting ease in finding solutions with longer cycle times. In all graphs of the experiments, the performance of ASCBS-A-ND and ASCBS-A-D outperforms that of ASCBS-IDA-ND and ASCBS-IDA-D, indicating that resolving conflicts within the same agent stream in the high-level solver consistently has better results than in the low-level solver. Recall that the cyclic vertex/edge constraint can be extracted to approximately $\lfloor \frac{l}{c} \rfloor$ vertex/edge constraints where $l$ is the path length and $c$ is the cycle time. In 'empty-8-8', the ASCBS-A-ND is better than the ASCBS-A-D due to the large $\lfloor \frac{l}{c} \rfloor$ with the cycle time $c = 1$. However, in the other three maps, finding solutions with a cycle time of $1$ is challenging, resulting in low success rates for all algorithms. In 'random-64-64-10', ASCBS-A-D slightly outperforms ASCBS-A-ND when $c$ exceeds $2$, since the small path length $l$ in this map leads to a small value of $\lfloor \frac{l}{c} \rfloor$. Thus, the advantage of the disjoint splitting strategy is larger than the advantage of the cyclic vertex/edge constraints. Conversely, in 'Paris\_1\_256', the ASCBS-A-ND is better than the ASCBS-A-D. This is because the $l$ is relatively large, making  $\lfloor \frac{l}{c} \rfloor$ large and the advantage of the disjoint splitting is more significant than the advantage of the cyclic vertex/edge constraints.

\section{Disucssion on the Strategy }
Let $\langle as_i, as_j, q_i, q_j, v \rangle$ represent a cyclic vertex constraint.
For the ASCBS without disjoint splitting, when $as_i = as_j$,  it is straightforward to generate two child CT nodes:
\begin{itemize}
    \item Add cyclic vertex constraint $\overline{(as_i, v, q_i, q_i)}$ to the constraint set and replan the path of $as_i$.
    \item Add cyclic vertex constraint $\overline{(as_j, v, q_j, q_j)}$ in the constraint set and replan the path of $as_j$.
\end{itemize}
However, this strategy does not uphold Lemma \ref{lemma:2}, as there may be a collision-free path for $as_i$ and an integer $q$ such that the path satisfies the parent CT node's constraint set and it is located at vertex $v$ at $q^{th}$ step and $q \equiv q_i \ (mod \ c)$, $q \neq q_i$, $q \neq q_j$. However, this path cannot satisfy the cyclic vertex constraints $\overline{(as_i, v, q_i, q_i)}$ and $\overline{(as_j, v, q_j, q_j)}$.

Furthermore, the completeness of the algorithm can not be preserved because the growth process of CT might lose some collision-free paths without the guarantee of Lemma \ref{lemma:2}.

\begin{figure*}[!t]
    \centering

    \includegraphics[width=0.98\textwidth]{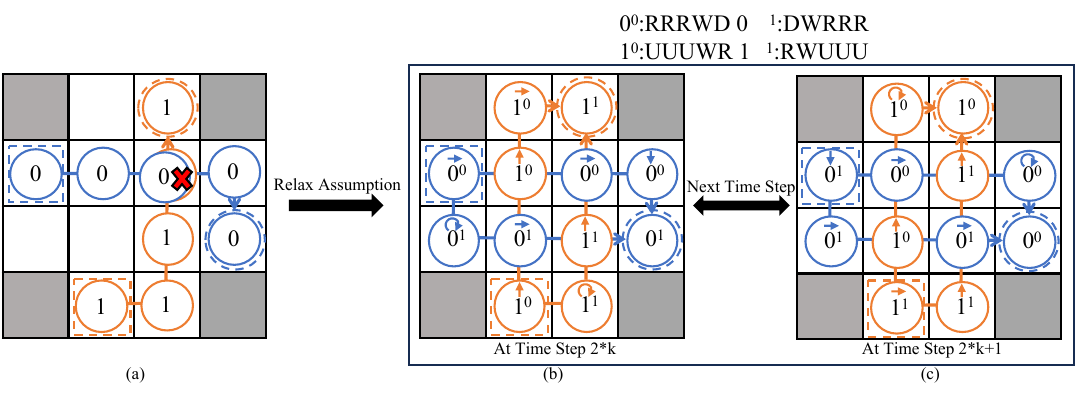}

    \hfill
   
    \caption{An instance whose cycle time is $1$.
    (a) There is no feasible solution with the assumption that the agent in the same agent stream should share the same action sequence, while (b) and (c) can form a solution without this assumption. (b) and (c) are the snapshots at the step time $2 \times  k$ and $2 \times k + 1$, respectively, where $k$ is where $k$ is a sufficiently large integer. The agents in the agent stream are divided into two groups based on the odd or even time and the superscript of the agent stream id indicates the group number. The arrow inside the circle demonstrates the next action of the corresponding agent. The action sequence is shown at the top of the figure. }
    \label{fig:relaxtion}
\end{figure*}

\section{Discussion on the Same Action Sequence Assumption}
In this section, we will show that the assumption that agents in the same agent stream should have the same action sequence hurts the solvability of the instance. An instance is considered solvable if and only if it has a feasible solution. Despite its impact on solvability, this assumption is still deemed valuable. We use $A_F$ to denote the assumption.
\begin{theorem}
Let $\Pi_0$ denote the set of solvable S-MAPF problem instances and $\Pi_1$ denote the set of solvable instances that is the S-MAPF problem without the assumption that agents in the same agent stream should have the same action sequence. It follows that $\Pi_0 \subset \Pi_1$.
\end{theorem}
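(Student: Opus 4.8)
The statement has two halves: the inclusion $\Pi_0 \subseteq \Pi_1$ and its strictness. The plan is to dispose of the inclusion with a one-line relaxation argument, and then to spend the real effort exhibiting and analysing a separating instance, for which Figure~\ref{fig:relaxtion} is the natural candidate.

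\textbf{Inclusion.} Fix any solvable instance in $\Pi_0$ together with a feasible solution $P = \{p_0,\dots,p_{n-1}\}$ under the assumption $A_F$. Reading $P$ as a solution of the \emph{same} instance without $A_F$ simply amounts to letting every agent of stream $as_i$ follow the common sequence $p_i$: the set of agents, their appearance times $k\cdot c + t^s_i$, and their trajectories are unchanged, and the collision-free conditions (cyclic vertex conflict, Equations~\ref{equ:v1}--\ref{equ:v3}; cyclic edge conflict, Equations~\ref{equ:e1}--\ref{equ:e3}) are evaluated on exactly the same data. Hence the instance lies in $\Pi_1$, so $\Pi_0 \subseteq \Pi_1$.

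\textbf{Strictness.} I would take the instance of Figure~\ref{fig:relaxtion}, which has cycle time $c = 1$, as the witness. Membership in $\Pi_1$ is shown by the schedule of panels (b)--(c): it partitions the agents of each stream into an even-time group and an odd-time group, gives each group its own action sequence, and one checks directly that no two agents ever occupy the same vertex at the same time step or traverse an edge in opposite directions, i.e.\ that Equations~\ref{equ:v1}--\ref{equ:v3} and \ref{equ:e1}--\ref{equ:e3} are never simultaneously satisfiable. For non-membership in $\Pi_0$ I would first prove a structural fact about $A_F$ at $c=1$: (i) no stream may wait and no stream's path may revisit a vertex, because a repetition $p_i^{q_i}=p_i^{q_j}$ with $q_i\neq q_j$ can be completed to a genuine cyclic vertex conflict $\langle as_i, as_i, q_i, q_j, \cdot\rangle$ — Equation~\ref{equ:ve2} is vacuous for $c=1$, and one picks nonnegative $k_i\neq k_j$ with $k_i + t^s_i + q_i = k_j + t^s_i + q_j$; and (ii) the paths of two distinct streams must be vertex-disjoint, since a shared vertex $p_i^{q_i}=p_j^{q_j}$ likewise yields a conflict, the required nonnegative $k_i,k_j$ always existing. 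Consequently a feasible $\Pi_0$-solution of the figure's instance would require a family of pairwise vertex-disjoint simple start-to-goal paths, and the topology of Figure~\ref{fig:relaxtion} admits no such family for the prescribed start/goal pairs; conversely, any such family would itself be a $\Pi_0$-solution, so this is exactly the obstruction. Combining the two parts gives an instance in $\Pi_1 \setminus \Pi_0$, hence $\Pi_0 \subset \Pi_1$ with the inclusion strict.

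\textbf{Main obstacle.} The delicate step is turning the intuitive ``a single action sequence per stream cannot realise the swap'' into a rigorous impossibility proof, i.e.\ parts (i)--(ii) above plus the claim that the figure's graph has no disjoint simple start-to-goal paths. The structural reduction makes the impossibility a finite, purely graph-theoretic check on a small graph, but care is needed to confirm that the reduction is tight: that waits are genuinely unavailable under $A_F$ at $c=1$ (so the paths are truly simple), that the needed nonnegative indices $k_i,k_j$ can always be chosen, and that every start-to-goal path in the figure is enumerated carefully enough to see that no two of them are vertex-disjoint. The inclusion direction, by contrast, is immediate.
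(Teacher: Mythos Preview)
Your proposal is correct and follows essentially the same route as the paper: both use the $c=1$ instance of Figure~\ref{fig:relaxtion}, verify $\Pi_1$-membership via the even/odd split of panels (b)--(c), and argue $\Pi_0$-infeasibility by first observing that under $A_F$ at $c=1$ waits are impossible and then deducing that stream~1 cannot avoid stream~0's path. Your vertex-disjointness claim~(ii) makes explicit what the paper leaves implicit in its ``separates the graph into two isolated parts'' sentence, but the underlying argument is the same.
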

\begin{proof}
It is evident that $\Pi_0 \subseteq \Pi_1$, since $\Pi_1$ has fewer constraints and the instance in $\Pi_0$ must be in $\Pi_1$. In the following, we will demonstrate an instance that is in $\Pi_1$, but not in $\Pi_0$. Figure \ref{fig:relaxtion} shows an S-MAPF instance with circle time $1$. With the assumption of $A_F$, an agent must appear in the start vertex for each time step and cannot perform a wait action during execution, because the wait action will cause the agent to collide with the next agent in the same agent stream. As shown in Figure \ref{fig:relaxtion}(a), the path of the agent stream $0$ will separate the graph into two isolated parts such that the start and the goal vertex of the agent stream $1$ are located on each one. This makes it impossible to plan a path for agent stream $1$ without collision. Therefore, the instance is not in $\Pi_0$. However, we can find a feasible solution without the assumption of $A_F$, as shown in Figures \ref{fig:relaxtion}(b) and \ref{fig:relaxtion}(c). For each agent stream, agents are divided into two groups based on the odd or even time steps in which they appear. Each agent group has its own path and shares a sequence of actions. The solution is collision-free, but not under the assumption of $A_F$, which puts this instance in $\Pi_1$.
\end{proof}
Despite its impact on solvability, adopting assumption $A_F$ is beneficial.
On the one hand, $A_F$ reduces the joint action space of the agent in the same agent stream and makes it possible to apply it in an unlimited working time. 
In Subsection 5.2 of the main text, an experiment is conducted to compare the computational efficiency under a time horizon between the solvers with assumption $A_F$ (ASCBS-A-ND) and without assumption $A_F$ (CBS). It shows the solver without the constraint on the joint action space is unavailable when the time horizon is large. On the other hand, the probability of the unsolvable instance sharply declines with increasing circle time and map size. Among all successful instances in our experiment, the instances that can be solved within 60 seconds but lack a feasible solution only exist in the 'empty-8-8' when the circle time is $1$. In conclusion, the S-MAPF problem is a practical model for navigating in assembly line scenarios.

\section{Discussion of Non-Uniform Cycle Time }
In this section, we present the extended definitions of cyclic conflict and constraints for the ASCBS algorithm with non-uniform cycle time. We only introduce cyclic vertex conflicts and constraints, omitting cyclic edge conflicts and constraints because of their symmetric nature.

The agent stream $i$ can be described as $as_i=\{v^s_i, v^g_i, t^s_i, c_i\}$ where $v^s_i$ and $v^g_i$ are the start and goal vertex, respectively. The $c_i$ is the cycle time of $as_i$ and the initial start time of $as_i$ is denoted by $t^s_i \in [0, c_i-1]$. It means that at each time step $k * c_i + t^s_i$ for all $k \in \mathbb{N}$, an agent of $as_i$ will appear at $v^s_i$. The solution can be denoted as $P = \{p_0, p_1,...,p_{n-1}\}$, and the path of $as_i$ is $p_i=[p_i^0, p_i^1, ..., p_i^{l_i-1}]$, with $l_i$ as the path length. We name $p^j_i$ the $j$-th \textit{step} of agent stream $as_i$.

The cyclic vertex conflict is denoted as $\langle as_i, as_j, q_i, q_j, v, c_i, c_j \rangle$. It occurs if and only if there exist two nonnegative integers $k_i$, $k_j$ such that the following equations are satisfied:
\begin{equation}
    p_i^{q_i} = v = p_j^{q_j}
\end{equation} 
\begin{equation} \label{equ:uv2}
    (k_i \cdot c_i + t^s_i) + q_i = (k_j \cdot c_j + t^s_j) + q_j
\end{equation}
\begin{equation} 
    (i - j)^2 + (k_i - k_j)^2 \neq 0
\end{equation}
It means that the agent of $as_i$ with start time $k_i \cdot c_i + t^s_i$ is located at the same vertex $v$ at the time step $(k_i \cdot c_i + t^s_i) + q_i$ with the agent of $as_j$ with start time $k_j \cdot c_j + t^s_j$. In addition, when $as_i$ and $as_j$ are the same agent stream ($j - i = 0$), $k_i$ equal to $k_j$ won't cause a conflict because these two agents are the same agent. Equation \ref{equ:uv2} and can be further derived:
\begin{equation} \label{equ:uv2n}
    k_i \cdot c_i - k_j \cdot c_j = t^s_j + q_j - t^s_i - q_i
\end{equation}
Let $g$ be the greatest common divisor of $c_i$ and $c_j$. If $t^s_j + q_j - t^s_i - q_i$ is not a multiplier of $g$, then Equation \ref{equ:uv2n} must not be satisfied, because the $k_i \cdot c_i - k_j \cdot c_j$ is a multiplier of $g$. Conversely, if $t^s_j + q_j - t^s_i - q_i$ is a multiplier of $g$, there must exist two nonnegative integers $k_i$, $k_j$ such that Equation \ref{equ:uv2n} is satisfied, according to the Lemma of Bézout \cite{mcconnell2011certifying} in number theory.
In the implementation of the ASCBS algorithm, for each vertex, we can record all agent streams that have passed through and check for potential cyclic vertex conflicts.

The cyclic vertex constraint is denoted as $\overline{(as, v, q_r, q_e, c_i)}$, indicating that the agent stream $as$ cannot occupy the vertex $v$ at the $q$-th step if $q$ satisfies the equations:
\begin{equation} 
     q \equiv q_r \ (mod \ c_i) 
\end{equation}
\begin{equation} 
     q \neq q_e
\end{equation}
The notation $\overline{(as, v, q)}$ represents the vertex constraint, indicating that the agent stream $as$ cannot be located at the vertex $v$ at the $q$-th step.

The conflict resolution process is similar to the algorithm in the main text. Here, we focus on the ASCBS algorithm without the disjoint splitting strategy. To resolve the vertex conflict between different agent streams, the two child CT nodes are generated respectively as follows to resolve the cyclic vertex conflict $\langle as_i, as_j, q_i, q_j, v, c_i, c_j \rangle$ where $as_i \neq as_j$:
\begin{itemize}
    \item Add the cyclic vertex constraint $\overline{(as_i, v, q_i, \emptyset, c_i)}$ to the constraint set and replan the path of $as_i$.
    \item Add the cyclic vertex constraint $\overline{(as_j, v, q_j, \emptyset, c_j)}$ to the constraint set and replan the path of $as_j$.
\end{itemize}

 To resolve the cyclic vertex conflict $\langle as_i, as_j, q_i, q_j , v, c_i, c_j \rangle$ where $as_i = as_j$, two child CT nodes are generated:
\begin{itemize}
    \item Add the vertex constraint $\overline{(as_i, v, q_i)}$ to the constraint set and replan the path of $as_i$.
    \item Add the vertex constraint $\overline{(as_i, v, q_j)}$ to the constraint set and replan the path of $as_i$.
\end{itemize}

\end{document}